\documentclass[12pt,oneside]{amsart}

\topmargin=-1.3cm 
\textheight=24cm 
\oddsidemargin=-1cm
\textwidth=18.9cm

\usepackage{amsthm,amsmath,amssymb,amsfonts,amscd,mathtools,,upgreek,url,
}
\usepackage{amsaddr}

\newcommand{\cS}{{\mathbb{S}}}
\newcommand{\tdt}{{\mathbb{D}_t}}

\newcommand{\ap}{{\upalpha}}

\newcommand{\cb}{c}

\newcommand{\ct}{\upalpha}

\newcommand{\td}{s}

\newcommand{\ud}{u_\ell,\dots}
\newcommand{\ua}{a}

\newcommand{\ub}{b}

\newcommand{\tua}{{\tilde{\ua}}}
\newcommand{\tub}{{\tilde{\ub}}}
\newcommand{\hua}{{\hat{\ua}}}
\newcommand{\hub}{{\hat{\ub}}}

\newcommand{\fn}{h}

\newcommand{\inb}{z}

\newcommand{\wv}{w}

\newcommand{\mf}{\varphi}
\newcommand{\la}{\uplambda}

\newcommand{\ff}{F}
\newcommand{\tff}{\tilde{F}}
\newcommand{\hff}{\hat{F}}

\newcommand{\GL}{\mathrm{GL}}

\newcommand{\fik}{\mathbb{C}}

\newcommand{\sm}{\mathrm{m}}

\newcommand{\lb}{\label}
\newcommand{\er}{\eqref}

\newcommand{\zp}{\mathbb{Z}_{\ge 0}}
\newcommand{\zsp}{\mathbb{Z}_{>0}}
\newcommand{\zz}{\mathbb{Z}}

\newcommand{\pd}{\partial}
\newcommand{\lt}{\mathbf{U}}

\newcommand{\mm}{\mathbf{M}}

\newcommand{\diu}{D}

\newtheorem{theorem}{Theorem}

\theoremstyle{definition}
\newtheorem{definition}{Definition}

\newtheorem{remark}{Remark}

\begin{document}

\title[Simplifications of Lax pairs and (doubly) modified integrable equations]{Simplifications of Lax pairs for differential-difference equations 
by gauge transformations and (doubly) modified integrable equations}
\date{}

\author{Sergei Igonin} 
\address{Center of Integrable Systems, P.G. Demidov Yaroslavl State University, Yaroslavl, Russia}
\email{s-igonin@yandex.ru}

\keywords{Integrable differential-difference equations, matrix Lax pairs,
gauge transformations, discrete substitutions of Miura-type, 
Itoh--Narita--Bogoyavlensky type equations, Toda lattice}

\subjclass[2010]{37K60, 37K35}


\begin{abstract}

Matrix differential-difference Lax pairs
play an essential role in the theory of integrable nonlinear differential-difference equations.
We present sufficient conditions which allow one to simplify such a Lax pair by matrix gauge transformations.
Furthermore, we describe a procedure for such a simplification and present applications of it 
to constructing new integrable equations connected 
by (non-invertible) discrete substitutions of Miura type to known equations with Lax pairs.

Suppose that one has three (possibly multicomponent) equations $E$, $E_1$, $E_2$, 
a (Miura-type) discrete substitution from $E_1$ to $E$, and a discrete substitution from $E_2$ to $E_1$.
Then $E_1$ and $E_2$ can be called a modified version of~$E$
and a doubly modified version of~$E$, respectively.
We demonstrate how the above-mentioned procedure 
helps (in the considered examples) to construct modified and doubly modified versions 
of a given equation possessing a Lax pair satisfying certain conditions.

The considered examples include scalar equations of Itoh--Narita--Bogoyavlensky type
and $2$-component equations related to the Toda lattice.
We present several new integrable equations connected 
by new discrete substitutions of Miura type to known equations.

\end{abstract}

\maketitle

\section{Introduction}
\lb{secint}

This paper is devoted to a study of relations between
matrix differential-difference Lax pairs, gauge transformations, 
and (non-invertible) discrete substitutions of Miura type (also called discrete Miura-type transformations), 
which play essential roles in the theory of integrable (nonlinear) differential-difference equations.
Such equations occupy a prominent place in the modern theory of integrable systems
and are presently the subject of intensive study.
In particular, such equations arise as as discretizations of integrable PDEs 
and as chains associated with Darboux transformations of PDEs 
(see, e.g.,~\cite{HJN-book16,kmw,LWY-book22} and references therein).

As explained in Section~\ref{secprel}, 
we consider an evolution differential-difference equation 
for a vector-function $u=u(n,t)$, where
\begin{itemize}
	\item $n$ is an integer variable,
	\item $t$ is a real or complex variable,
	\item $u=u(n,t)$ takes values in the space~$\fik^\diu$ for some positive integer~$\diu$.
\end{itemize}

Matrix Lax pairs (MLPs) for such equations are defined in Definition~\ref{dmlpgt}, 
which describes also an action of the group of (matrix) gauge transformations
on the set of MLPs of a given differential-difference equation.
As explained in Definition~\ref{dmlpgt}, 
\begin{itemize}
	\item in a MLP $(\mm,\,\lt)$ 
the matrix-valued function~$\mm$ is called the \mbox{\emph{$\cS$-part}} of the MLP,
	\item two MLPs of a given equation are \emph{gauge equivalent} 
	if one MLP is obtained from the other by means of a gauge transformation.
\end{itemize}

In Section~\ref{secsimpl} we present sufficient conditions 
which allow one to simplify a given MLP by gauge transformations 
and describe a procedure for such a simplification.
The words 
$$
\text{``to simplify a given MLP by gauge transformations''}
$$
mean that, applying suitable gauge transformations to a given MLP 
satisfying certain conditions, 
we eliminate the dependence on~$u_k$ for some values of~$k\in\zz$ 
in the~\mbox{$\cS$-part} of the MLP.
(See Remark~\ref{rsgt} for more details.)
Here $u_k$ denotes the vector-function~$u_k(n,t)=u(n+k,t)$.

It is well known that Miura-type transformations (MTs) belong 
to the most important tools in the theories of partial differential, difference 
and differential-difference equations~\cite{miura68,mss91,yam2006,garif2018,GarYam2019,gram11,meshk2008,LWY-book22,sokolov88,startsev01,suris03,yam94}.
MTs for partial differential equations are also called 
differential substitutions~\cite{sokolov88,startsev01}, 
while MTs for differential-difference equations are sometimes called discrete substitutions~\cite{yam94}.

Suppose that we have three (possibly multicomponent) equations $E$, $E_1$, $E_2$, 
a MT from~$E_1$ to~$E$, and a MT from~$E_2$ to~$E_1$.
Then one can say that
\begin{itemize}
	\item $E_1$ is a modified version of~$E$,
	\item $E_2$ is a modified version of~$E_1$ and is a doubly modified version of~$E$.
\end{itemize}
Precise definitions are given in Section~\ref{secprel}.

When one classifies a certain class of integrable equations with two independent variables, 
one usually gets several basic equations 
such that all the other equations from the class under consideration can be obtained from 
the basic ones by applying MTs and are modified or doubly modified versions of the basic ones
(see, e.g.,~\cite{mss91,yam2006,garif2018,GarYam2019,gram11,meshk2008,LWY-book22}).
Therefore, it is important to develop methods to construct MTs as well as 
modified and doubly modified versions for a given integrable equation.

In Sections~\ref{secind},~\ref{seceqtod} we demonstrate how 
the above-mentioned procedure to simplify MLPs helps 
(in the considered examples) to construct modified and doubly modified versions 
of a given equation possessing a MLP satisfying certain conditions.
This allows one to derive new integrable equations connected by new MTs to known equations.

The considered examples include scalar equations of Itoh--Narita--Bogoyavlensky type 
(in Section~\ref{secind})
and $2$-component equations related to the Toda lattice (in Section~\ref{seceqtod}).
The obtained integrable equations~\er{vtlong}, \er{mtoda}, \er{longtoda} and 
MTs \er{dslong}, \er{ds4}, \er{dstoda}, \er{dsltoda} are new, to our knowledge.

The paper is organized as follows.
Section~\ref{secprel} contains preliminaries, 
including conventions, notation, basic notions, and terminology.
Section~\ref{secsimpl} describes sufficient conditions which allow one 
to simplify a MLP by gauge transformations and a procedure for such a simplification.
Sections~\ref{secind},~\ref{seceqtod} contain examples of applications of this procedure, 
as outlined above.
Section~\ref{sconcl} summarizes the results of this paper, 
compares them with the results of~\cite{BIg2016},
and suggests some open problems for further research.

\section{Preliminaries}
\lb{secprel}


In this paper we use the following notation and conventions.
\begin{itemize}
	\item MLP = matrix Lax pair. 
	\item MT = Miura-type transformation. We study discrete MTs, which are also called discrete substitutions.
	\item $\zp$ and $\zsp$ are the sets of nonnegative and positive integers, respectively.
	\item Each considered function 
is assumed to be analytic on its domain of definition.
In particular, this holds for meromorphic functions. By our convention, the poles of
a meromorphic function do not belong to its domain of definition, thus
such a function is analytic on its domain of definition.
\item Unless otherwise specified, scalar variables and functions are supposed to be 
$\fik$-valued. 
\item $n$ is an integer variable, while $t$ is a real or complex variable.
\item For any function $w=w(n,t)$ and each $\ell\in\zz$ 
we denote by~$w_\ell$ the function $w_\ell(n,t)=w(n+\ell,t)$.
In particular, $w_0=w$.
\item Matrix-valued functions are sometimes called simply matrices.
\end{itemize}

Fix $\diu\in\zsp$. Let $\ua,\ub\in\zz$ such that $\ua\le\ub$.
We consider an evolution differential-difference equation of the form
\begin{gather}
\lb{sdde}
u_t=\ff(u_\ua,u_{\ua+1},\dots,u_\ub)
\end{gather}
for a $\diu$-component vector-function $u=\big(u^1(n,t),\dots,u^{\diu}(n,t)\big)$,
where 
\begin{itemize}
	\item $\ff$ is a $\diu$-component vector-function $\ff=(\ff^1,\dots,\ff^{\diu})$,
	\item $u_t=\dfrac{\pd}{\pd t}(u)$ and $u_\ell=u_\ell(n,t)=u(n+\ell,t)$ for $\ell\in\zz$.
\end{itemize}
One has $u_\ell=(u^1_\ell,\dots,u^\diu_\ell)$, where $u_\ell^\xi(n,t)=u^\xi(n+\ell,t)$ 
for $\xi=1,\dots,\diu$.

Equation \eqref{sdde} is equivalent to the following infinite collection of differential equations
$$
\frac{\pd}{\pd t}\big(u(n,t)\big)=\ff\big(u(n+\ua,t),u(n+\ua+1,t),\dots,u(n+\ub,t)\big),\qquad\quad n\in\zz.
$$
In components equation~\eqref{sdde} reads
\begin{gather}
\lb{msdde}
\frac{\pd}{\pd t}\big(u^i\big)=\ff^i(u_\ua^\xi,u_{\ua+1}^\xi,\dots,u_\ub^\xi),\qquad\quad
i=1,\dots,\diu,
\end{gather}
which implies 
\begin{gather}
\lb{uildde}
\frac{\pd}{\pd t}\big(u^i_\ell\big)=\ff^i(u_{\ua+\ell}^\xi,u_{\ua+1+\ell}^\xi,\dots,u_{\ub+\ell}^\xi),\qquad\quad
i=1,\dots,\diu,\qquad \ell\in\zz.
\end{gather}

We use the formal theory of differential-difference equations, where one regards 
\begin{gather}
\lb{uldiu}
u_\ell=(u^1_\ell,\dots,u^\diu_\ell),\qquad\quad \ell\in\zz,
\end{gather}
as independent quantities, which are called \emph{dynamical variables}.
We consider functions of the dynamical variables~\er{uldiu}.
In this paper, the notation of the type $\fn=\fn(\ud)$ means 
that a function~$\fn$ depends on a finite 
number of the variables $u_\ell^{\xi}$ for $\ell\in\zz$ and $\xi=1,\dots,\diu$.

The notation of the type 
$\fn=\fn(u_\alpha,\dots,u_\beta)$ or $\fn=\fn(u_\alpha,u_{\alpha+1},\dots,u_\beta)$ 
for some integers $\alpha\le\beta$ means that $\fn$ 
may depend on $u_\ell^{\xi}$ for $\ell=\alpha,\dots,\beta$ 
and $\xi=1,\dots,\diu$.

We denote by~$\cS$ the \emph{shift operator} with respect to the variable~$n$.
For any function $g=g(n,t)$ one has the function~$\cS(g)$ 
such that $\cS(g)(n,t)=g(n+1,t)$.
For each $k\in\zz$, we have the $k$th power~$\cS^k$ 
of the operator~$\cS$ and the formula $\cS^k(g)(n,t)=g(n+k,t)$.

Since $u_\ell$ corresponds to $u(n+\ell,t)$, the operator~$\cS$ 
and its powers~$\cS^k$ for $k\in\zz$ act on functions of~$u_\ell$ as follows
\begin{gather}
\lb{csuf}
\cS(u_\ell)=u_{\ell+1},\qquad\cS^k(u_\ell)=u_{\ell+k},\qquad
\cS^k\big(\fn(u_\ell,\dots)\big)=\fn(\cS^k(u_{\ell}),\dots).
\end{gather}
That is, applying $\cS^k$ to a function $\fn=\fn(\ud)$, 
we replace~$u_\ell^{\xi}$ by~$u_{\ell+k}^{\xi}$ in~$\fn$ for all $\ell,\xi$.

The \emph{total derivative operator}~$\tdt$ corresponding to~\eqref{sdde} 
acts on functions of the variables~$u_\ell=(u^1_\ell,\dots,u^\diu_\ell)$ as follows
\begin{gather}
\lb{dtfu}
\tdt\big(\fn(\ud)\big)=\sum_{\substack{\ell\in\zz,\\ 
i=1,\dots,\diu}}
\cS^\ell(\ff^i)\cdot\frac{\pd \fn}{\pd u_\ell^i},
\end{gather}
where $\ff^i$ are the components of the vector-function 
$\ff=(\ff^1,\dots,\ff^\diu)$ from~\eqref{sdde}.
Formula~\er{dtfu} is explained by equations~\er{uildde} 
and the chain rule for the derivative with respect to~$t$.
Formula~\er{dtfu} implies 
the relation $\tdt\big(\cS(g)\big)=\cS\big(\tdt(g)\big)$ for any function $g=g(\ud)$.


\begin{definition}
\lb{dmlpgt}
Let $\sm\in\zsp$. 
Let $\mm=\mm(\ud,\la)$ and $\lt=\lt(\ud,\la)$ be $\sm\times\sm$ matrix-valued functions
depending on the variables $u_\ell$ and a complex parameter $\la$.
Suppose that $\mm$ is invertible (i.e., $\mm$ takes values in the group~$\GL_\sm(\fik)$ 
of invertible $\sm\times\sm$ matrices) and one has
\begin{gather}
\lb{lr}
\tdt(\mm)=\cS(\lt)\mm-\mm\lt,
\end{gather}
where $\tdt$ is given by~\eqref{dtfu}.
Then the pair $(\mm,\,\lt)$ is called a \emph{matrix Lax pair} (MLP) for equation~\eqref{sdde}.
Equation~\er{lr} implies that the auxiliary linear system 
\begin{gather}
\lb{syspsi}
\begin{aligned}
\cS(\Phi)&=\mm\Phi,\\
\frac{\pd}{\pd t}(\Phi)&=\lt\Phi
\end{aligned}
\end{gather}
is compatible modulo~\eqref{sdde}. 
Here $\Phi=\Phi(n,t,\la)$ is an invertible $\sm\times\sm$ matrix-valued function.

We say that the matrix $\mm=\mm(\ud,\la)$ is the \emph{$\cS$-part} of the MLP $(\mm,\,\lt)$.

Then for any invertible $\sm\times\sm$ matrix  
$\mathbf{g}=\mathbf{g}(\ud,\la)$ the matrices
\begin{gather}
\lb{ggtmtu}
\hat \mm=\cS(\mathbf{g})\cdot \mm\cdot\mathbf{g}^{-1},\qquad\quad
\hat\lt=\tdt(\mathbf{g})\cdot\mathbf{g}^{-1}+
\mathbf{g}\cdot\lt\cdot\mathbf{g}^{-1}
\end{gather}
form a MLP for equation~\eqref{sdde} as well.
The MLP $(\hat \mm,\,\hat\lt)$ is \emph{gauge equivalent} 
to the MLP $(\mm,\,\lt)$ and is obtained from $(\mm,\,\lt)$ 
by means of the \emph{gauge transformation}~$\mathbf{g}$.

Such gauge transformations~$\mathbf{g}$ form a group with respect to the multiplication of matrices.
Formulas~\er{ggtmtu} determine an action of the group of gauge transformations
on the set of MLPs of a given equation~\eqref{sdde}.
\end{definition}

\begin{remark}
\lb{reminv}
Definition~\ref{dmlpgt} says that, in any MLP $\big(\mm(\ud,\la),\,\lt(\ud,\la)\big)$,
the matrix~$\mm(\ud,\la)$ is required to be invertible.
However, in some examples of MLPs $\big(\mm(\ud,\la),\,\lt(\ud,\la)\big)$, 
it may happen that the matrix~$\mm(\ud,\la)$ is invertible for almost all (but not all) 
values of~$u_\ell$ and~$\la$.
The exceptional points $(\ud,\la)$ where the matrix~$\mm(\ud,\la)$ 
is singular are excluded from consideration in this paper.

More precisely, the situation is as follows.
There are integers $p\le q$ such that 
$$
\mm(\ud,\la)=\mm(u_p,u_{p+1},\dots,u_q,\la).
$$
Recall that $u_{\ell}=(u^1_{\ell},\dots,u^\diu_{\ell})$ is a $\diu$-dimensional vector 
for each $\ell\in\mathbb{Z}$, and $\la$ is a scalar parameter.
We assume that the $\sm\times\sm$ matrix-valued function $\mm(u_p,u_{p+1},\dots,u_q,\la)$
is defined and is invertible on an open subset~$\mathbb{W}$ 
of the space~$\fik^{(q-p+1)\diu+1}$ with the following coordinates
\begin{gather}
\notag
u_\ell^{\xi},\qquad\la,\qquad\quad \ell=p,p+1,\dots,q,\qquad\xi=1,\dots,\diu.
\end{gather}
\end{remark}

Similarly to~\eqref{sdde}, let $\tua,\tub\in\zz$, $\,\tua\le\tub$, 
and consider another differential-difference equation
\begin{gather}
\lb{vdde}
v_t=\tff(v_\tua,v_{\tua+1},\dots,v_\tub)
\end{gather}
for a $\diu$-component vector-function $v=\big(v^1(n,t),\dots,v^{\diu}(n,t)\big)$.
One has the corresponding dynamical variables $v_\ell=(v^1_\ell,\dots,v^\diu_\ell)$, $\ell\in\zz$.
Similarly to~\eqref{csuf},~\eqref{dtfu}, the operators $\cS$ and $\tdt$ act on functions 
of the variables~$v_\ell=(v^1_\ell,\dots,v^\diu_\ell)$ as follows
\begin{gather*}
\cS(v_\ell)=v_{\ell+1},\qquad\cS^k(v_\ell)=v_{\ell+k},\qquad
\cS^k\big(\fn(v_\ell,\dots)\big)=\fn(\cS^k(v_{\ell}),\dots),\qquad k\in\zz,\\
\tdt\big(\fn(v_\ell,\dots)\big)=\sum_{\ell,i}\cS^\ell(\tff^i)\cdot\frac{\pd \fn}{\pd v_\ell^i},
\end{gather*}
where $\tff^i$ are the components of the vector-function 
$\tff=(\tff^1,\dots,\tff^\diu)$ from~\eqref{vdde}.

\begin{definition}
\lb{defmtt}
A \emph{Miura-type transformation} (MT) from equation~\eqref{vdde} 
to equation~\eqref{sdde} is given by a formula of the type
\begin{gather}
\lb{uvf}
u=\mf(v_\ell,\dots)
\end{gather}
(where a $\diu$-component vector-function $\mf$ depends on 
a finite number of the variables $v_\ell=(v^1_\ell,\dots,v^\diu_\ell)$, $\ell\in\zz$,)
such that if $v=v(n,t)$ obeys~\eqref{vdde} then $u=u(n,t)$ 
given by~\eqref{uvf} obeys~\eqref{sdde}.

More precisely, in order to be a MT from~\eqref{vdde} to~\eqref{sdde},
formula~\eqref{uvf} must satisfy equations~\er{tdtmf} explained below. 
In components \eqref{uvf} reads 
\begin{gather}
\lb{uimfi}
u^i=\mf^i(v^\xi_\ell,\dots),\qquad\quad i=1,\dots,\diu,
\end{gather}
where $\mf^i$ are the components of the vector-function $\mf=(\mf^1,\dots,\mf^{\diu})$ from~\er{uvf}.
If we substitute the right-hand side of~\eqref{uimfi} in place of $u^i$ in~\eqref{msdde}, we obtain 
\begin{gather}
\lb{tdtmf}
\tdt\big(\mf^i(v^\xi_\ell,\dots)\big)=
\ff^i\big(\cS^\ua(\mf^\xi),\cS^{\ua+1}(\mf^\xi),\dots,\cS^\ub(\mf^\xi)\big),\qquad\quad
i=1,\dots,\diu,
\end{gather}
which must be an identity in the variables $v^\xi_\ell$.

Let $s,m\in\zz$, $\,s\le m$, be such that the function~$\mf$ in~\eqref{uvf} 
is of the form $\mf=\mf(v_s,v_{s+1},\dots,v_m)$ 
(i.e., $\mf$ may depend only on $v_s,v_{s+1},\dots,v_m$)
and depends nontrivially on $v_s$, $v_m$.
Then the number $m-s$ is said to be the \emph{order} of the MT~\eqref{uvf}.

If \er{vdde} and~\eqref{sdde} are connected by a MT~\er{uvf}, 
then equation~\er{vdde} can be called a \emph{modified version} of equation~\eqref{sdde}.
This notion is nontrivial when the transformation~\er{uvf} is non-invertible, 
which includes all the examples considered in this paper.
\end{definition}

Furthermore, let $\hua,\hub\in\zz$, $\,\hua\le\hub$, 
and consider another differential-difference equation
\begin{gather}
\lb{wdde}
w_t=\hff(w_\hua,w_{\hua+1},\dots,w_\hub)
\end{gather}
for a $\diu$-component vector-function $w=\big(w^1(n,t),\dots,w^{\diu}(n,t)\big)$.
Suppose that there is a MT
\begin{gather}
\lb{uwds}
v=\psi(w_\ell,\dots)
\end{gather}
from equation~\er{wdde} to equation~\er{vdde}.
In this case, since \er{wdde} is connected to~\eqref{sdde} 
by the composition of the two MTs~\er{uwds} and~\er{uvf}, 
one can say that equation~\er{wdde} is a \emph{doubly modified version} of equation~\eqref{sdde}.

\begin{remark}
\lb{rmcds}
The paper~\cite{BIg2016} presents a method to derive 
MTs from matrix Lax pairs~$(\mm,\,\lt)$
in the case when $\mm=\mm(u_0,\la)$ depends only on~$u_0$,~$\la$ 
and satisfies certain conditions.
Some ideas behind the method of~\cite{BIg2016} are inspired
by a result of V.G.~Drinfeld and V.V.~Sokolov on MTs
for the partial differential Korteweg--de Vries equation~\cite{drin-sok85}.

In~\cite{BIg2016,kmw}, matrix Lax pairs 
for differential-difference equations are called Darboux--Lax representations, 
since many of them arise from Darboux transformations of PDEs (see, e.g.,~\cite{kmw}).
\end{remark}

\section{Simplifications of matrix Lax pairs by gauge transformations}
\lb{secsimpl}

According to Definition~\ref{dmlpgt}, 
in a MLP~$(\mm,\,\lt)$ the matrix-valued function~$\mm$ may depend on 
any finite number of the variables~$u_{\ell}=(u^1_{\ell},\dots,u^\diu_{\ell})$, $\ell\in\zz$, 
and a parameter~$\la$. For any fixed integers $\ell_1,\dots,\ell_\diu$, we can relabel 
\begin{gather}
\lb{relab}
u^1:=u^1_{\ell_1},\quad\dots\quad,\quad u^{\diu}:=u^{\diu}_{\ell_\diu}.
\end{gather}
Relabeling~\er{relab} means that in equation~\er{sdde} 
we make the following invertible change of variables 
\begin{gather}
\notag
u^1(n,t)\,\mapsto\,u^1(n+\ell_1,t),\quad\dots\quad,\quad 
u^{\diu}(n,t)\,\mapsto\,u^{\diu}(n+\ell_\diu,t).
\end{gather}
After a suitable relabeling of this type, we can assume that $\mm$ 
is of the form $\mm=\mm(u_0,\dots,u_k,\la)$ for some $k\in\zp$
(that is, we can assume that $\mm$ does not depend on~$u_s$ for negative integers~$s$).

As discussed in Remark~\ref{rsgt} below 
(using the terminology introduced in Definitions~\ref{dord},~\ref{dsim}),
Theorem~\ref{thmuk} gives sufficient conditions 
which allow one to simplify a MLP by gauge transformations.

\begin{theorem}
\label{thmuk}
Let $\sm,k\in\zsp$. Consider an $\sm\times\sm$ matrix-valued function~$\mm=\mm(u_0,u_1,\dots,u_k,\la)$, 
where $u_{\ell}=(u^1_{\ell},\dots,u^\diu_{\ell})$ for any $\ell\in\zp$. 
Suppose that 
\begin{gather}
\lb{pdm}
\forall\, i,j=1,\dots,\diu\qquad\quad
\frac{\pd}{\pd u^i_0}\Big(\frac{\pd}{\pd u^j_k}
\big(\mm(u_0,u_1,\dots,u_k,\la)\big)\cdot \mm(u_0,u_1,\dots,u_k,\la)^{-1}\Big)=0.
\end{gather}

Then there is an $\sm\times\sm$ matrix gauge 
transformation~$\mathbf{g}=\mathbf{g}(u_0,\dots,u_{k-1},\la)$ 
such that the matrix-valued function 
\begin{gather}
\lb{hatm}
\hat \mm=\cS(\mathbf{g})\cdot \mm\cdot\mathbf{g}^{-1}
\end{gather}
is of the form $\hat \mm=\hat \mm(u_0,\dots,u_{k-1},\la)$. 
Thus, applying this gauge transformation, we eliminate the dependence 
on~$u_k=(u^1_k,\dots,u^\diu_k)$.
\end{theorem}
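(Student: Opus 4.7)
The plan is to construct $\mathbf{g}$ by imposing directly the condition $\pd\hat\mm/\pd u_k^j=0$ for every $j=1,\dots,\diu$, and showing that this translates into a Frobenius-integrable overdetermined matrix PDE system for $\mathbf{g}$. First I would differentiate~\er{hatm} in $u_k^j$. Since $\mathbf{g}$ has no $u_k$-dependence, this gives
\begin{equation*}
\frac{\pd\hat\mm}{\pd u_k^j}
=\cS\!\Big(\frac{\pd\mathbf{g}}{\pd u_{k-1}^j}\Big)\cdot\mm\cdot\mathbf{g}^{-1}
+\cS(\mathbf{g})\cdot\frac{\pd\mm}{\pd u_k^j}\cdot\mathbf{g}^{-1}.
\end{equation*}
Setting this to zero and writing $A_j:=(\pd\mm/\pd u_k^j)\,\mm^{-1}$, the assumption~\er{pdm} is exactly the statement that $A_j$ is independent of $u_0$, so $A_j=A_j(u_1,\dots,u_k,\la)$. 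Let $B_j(u_0,\dots,u_{k-1},\la)$ denote the same function of its arguments, so that $\cS(B_j)=A_j$. Applying $\cS^{-1}$ to both sides, the condition that $\hat\mm$ be independent of~$u_k$ reduces to the first-order linear system
\begin{equation*}
\frac{\pd\mathbf{g}}{\pd u_{k-1}^j}=-\mathbf{g}\cdot B_j(u_0,\dots,u_{k-1},\la),
\qquad j=1,\dots,\diu.
\end{equation*}

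Next I would verify that this system is Frobenius-compatible, regarding $u_0,\dots,u_{k-2}$ and~$\la$ as passive parameters. A direct computation shows that equality of the mixed partials $\pd^2\mathbf{g}/(\pd u_{k-1}^i\pd u_{k-1}^j)$ is equivalent to the zero-curvature identity
\begin{equation*}
[B_i,B_j]=\frac{\pd B_j}{\pd u_{k-1}^i}-\frac{\pd B_i}{\pd u_{k-1}^j},
\qquad i,j=1,\dots,\diu.
\end{equation*}
This identity is automatic from the existence of $\mm$ itself: rewriting the definition of~$A_j$ as $\pd\mm/\pd u_k^j=A_j\mm$ and equating $\pd^2\mm/(\pd u_k^i\pd u_k^j)$ computed in the two orders yields $[A_i,A_j]=\pd A_j/\pd u_k^i-\pd A_i/\pd u_k^j$; the corresponding identity for~$B_j$ in~$u_{k-1}$ then follows, since $B_j$ is by construction the same function of its arguments as~$A_j$ with the indices shifted down by one.

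Finally, the standard Frobenius existence theorem furnishes locally an analytic matrix-valued solution $\mathbf{g}$ with any prescribed initial value. I would choose $\mathbf{g}=I$ at a reference value of~$u_{k-1}$, uniformly in the passive parameters $u_0,\dots,u_{k-2},\la$; invertibility of~$\mathbf{g}$ then persists on its domain of definition because $\det\mathbf{g}$ satisfies the scalar equation $\pd(\det\mathbf{g})/\pd u_{k-1}^j=-\mathrm{tr}(B_j)\,\det\mathbf{g}$ via Jacobi's formula, and analytic dependence on $u_0,\dots,u_{k-2},\la$ follows from the standard analytic-dependence-on-parameters statement. By construction one then has $\pd\hat\mm/\pd u_k^j=0$ for all~$j$, so $\hat\mm=\hat\mm(u_0,\dots,u_{k-1},\la)$, as required. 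The main technical care in this argument lies in the shift-bookkeeping relating $A_j$ and~$B_j$ and in deriving the zero-curvature identity from the mixed-partial relation for~$\mm$; once those are in place, the existence step is a routine application of classical theory.
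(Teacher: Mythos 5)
Your proof is correct, but it takes a genuinely different route from the paper's. The paper does not solve a differential system at all: it freezes $u_0$ at a constant vector $a_0$ and takes the single explicit gauge transformation $\mathbf{g}=\cS^{-1}\big(\mm(a_0,u_1,\dots,u_k,\la)^{-1}\big)$ (formula \er{guum}); a direct computation then shows that $\frac{\pd}{\pd u^j_k}(\hat\mm)$ equals $\mm(a_0,u_1,\dots,u_k,\la)^{-1}\cdot L^j\cdot\mm(u_0,\dots,u_k,\la)\cdot\mathbf{g}^{-1}$, where $L^j$ is the difference between the logarithmic derivative $\frac{\pd}{\pd u^j_k}(\mm)\cdot\mm^{-1}$ evaluated at general $u_0$ and at $u_0=a_0$. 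Hypothesis \er{pdm} says $L^j$ is independent of $u_0$, and $L^j$ vanishes at $u_0=a_0$, so $L^j\equiv 0$. Your argument instead converts the requirement $\frac{\pd}{\pd u^j_k}(\hat\mm)=0$ into the linear system $\frac{\pd\mathbf{g}}{\pd u^j_{k-1}}=-\mathbf{g}B_j$ and invokes the Frobenius theorem; your shift bookkeeping between $A_j$ and $B_j$ is right, the hypothesis is used exactly where it must be (to ensure $\cS^{-1}(A_j)$ involves no $u_{-1}$, so that the ansatz $\mathbf{g}=\mathbf{g}(u_0,\dots,u_{k-1},\la)$ is consistent), and the zero-curvature compatibility condition does follow automatically from equality of mixed partials of $\mm$. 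What the paper's route buys is a closed-form $\mathbf{g}$ with no appeal to existence theory --- and this explicit formula is precisely what is computed in the examples of Sections~\ref{secind} and~\ref{seceqtod} --- while your route buys a description of the full family of admissible gauge transformations (the paper's \er{guum} is one particular solution of your Frobenius system), at the cost of only local existence and the extra verifications of compatibility and invertibility.
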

\begin{proof}
Recall that $u_{\ell}$ are $\diu$-dimensional vectors and $\la$ is a scalar parameter.
According to Remark~\ref{reminv}, the matrix-valued function
$\mm(u_0,u_1,\dots,u_k,\la)$ is defined and is invertible on an open subset~$\mathbb{W}$ 
of~$\fik^{(k+1)\diu+1}$.
Without loss of generality, we can assume the set~$\mathbb{W}$ to be connected.
(If $\mathbb{W}$ is not connected, then one can repeat the arguments presented below 
for each connected component of~$\mathbb{W}$ separately.)

Fix a constant vector $a_0=(a^1_0,\dots,a^\diu_0)\in\fik^\diu$ and substitute $u_0=a_0$ 
in $\mm(u_0,u_1,\dots,u_k,\la)$. 
The vector $a_0$ is chosen so that $\mm(a_0,u_1,\dots,u_k,\la)$ is defined and is invertible.

We set 
\begin{gather}
\lb{guum}
\mathbf{g}(u_0,\dots,u_{k-1},\la)=\cS^{-1}\Big(\big(\mm(a_0,u_1,\dots,u_k,\la)\big)^{-1}\Big).
\end{gather}
Then, since $\cS(\mathbf{g})(u_1,\dots,u_k,\la)=\mm(a_0,u_1,\dots,u_k,\la)^{-1}$, 
for the matrix-valued function~\er{hatm} one has
\begin{multline}
\lb{hatmu}
\hat \mm(u_0,u_1,\dots,u_k,\la)=\cS(\mathbf{g})(u_1,\dots,u_k,\la)\cdot \mm(u_0,u_1,\dots,u_k,\la)
\cdot\mathbf{g}^{-1}(u_0,\dots,u_{k-1},\la)=\\
=\mm(a_0,u_1,\dots,u_k,\la)^{-1}\cdot \mm(u_0,u_1,\dots,u_k,\la)
\cdot\mathbf{g}^{-1}(u_0,\dots,u_{k-1},\la).
\end{multline}
Let $j\in\{1,\dots,\diu\}$. Set $\mathbf{G}=\mathbf{g}^{-1}(u_0,\dots,u_{k-1},\la)$.
Since $\dfrac{\pd}{\pd u^j_k}(\mathbf{G})=\dfrac{\pd}{\pd u^j_k}\big(\mathbf{g}^{-1}(u_0,\dots,u_{k-1},\la)\big)=0$, from~\er{hatmu} we get
\begin{multline}
\lb{pdhatm}
\frac{\pd}{\pd u^j_k}\big(\hat \mm(u_0,u_1,\dots,u_k,\la)\big)
=\frac{\pd}{\pd u^j_k}\Big(\mm(a_0,u_1,\dots,u_k,\la)^{-1}\cdot \mm(u_0,u_1,\dots,u_k,\la)\Big)
\cdot\mathbf{G}=\\
=\Big(\frac{\pd}{\pd u^j_k}\big(\mm(a_0,u_1,\dots,u_k,\la)^{-1}\big)\cdot \mm(u_0,u_1,\dots,u_k,\la)+\\
+\mm(a_0,u_1,\dots,u_k,\la)^{-1}\cdot\frac{\pd}{\pd u^j_k}\big(\mm(u_0,u_1,\dots,u_k,\la)\big)\Big)
\cdot\mathbf{G}=\\
=\Big(-\mm(a_0,u_1,\dots,u_k,\la)^{-1}\cdot\frac{\pd}{\pd u^j_k}\big(\mm(a_0,u_1,\dots,u_k,\la\big)\cdot 
\mm(a_0,u_1,\dots,u_k,\la)^{-1}\cdot \mm(u_0,u_1,\dots,u_k,\la)+\\
+\mm(a_0,u_1,\dots,u_k,\la)^{-1}\cdot\frac{\pd}{\pd u^j_k}\big(\mm(u_0,u_1,\dots,u_k,\la)\big)\Big)
\cdot\mathbf{G}=\\
=\mm(a_0,u_1,\dots,u_k,\la)^{-1}\cdot\Big(-\frac{\pd}{\pd u^j_k}\big(\mm(a_0,u_1,\dots,u_k,\la)\big)\cdot
\mm(a_0,u_1,\dots,u_k,\la)^{-1}+\\
+\frac{\pd}{\pd u^j_k}\big(\mm(u_0,u_1,\dots,u_k,\la)\big)\cdot
\mm(u_0,u_1,\dots,u_k,\la)^{-1}\Big)\cdot \mm(u_0,u_1,\dots,u_k,\la)\cdot\mathbf{G}=\\
=\mm(a_0,u_1,\dots,u_k,\la)^{-1}\cdot L^j(u_0,u_1,\dots,u_k,\la)\cdot \mm(u_0,u_1,\dots,u_k,\la)\cdot\mathbf{G},
\end{multline}
where 
\begin{multline}
\lb{Lu}
L^j(u_0,u_1,\dots,u_k,\la)=-\frac{\pd}{\pd u^j_k}\big(\mm(a_0,u_1,\dots,u_k,\la)\big)\cdot
\mm(a_0,u_1,\dots,u_k,\la)^{-1}+\\
+\frac{\pd}{\pd u^j_k}\big(\mm(u_0,u_1,\dots,u_k,\la)\big)\cdot \mm(u_0,u_1,\dots,u_k,\la)^{-1}.
\end{multline}
From~\er{Lu} and~\er{pdm} it follows that
\begin{gather}
\lb{pdL}
\begin{gathered}
\frac{\pd}{\pd u^i_0}\big(L^j(u_0,u_1,\dots,u_k,\la)\big)=
\frac{\pd}{\pd u^i_0}\Big(\frac{\pd}{\pd u^j_k}
\big(\mm(u_0,u_1,\dots,u_k,\la)\big)\cdot \mm(u_0,u_1,\dots,u_k,\la)^{-1}\Big)=0\\
\forall\, i=1,\dots,\diu,
\end{gathered}\\
\lb{La0}
L^j(a_0,u_1,\dots,u_k,\la)=0.
\end{gather}
Equations~\er{pdL},~\er{La0} imply that the matrix-valued function
$L^j(u_0,u_1,\dots,u_k,\la)$ is identically zero.

Substituting $L^j(u_0,u_1,\dots,u_k,\la)=0$ in~\er{pdhatm}, one obtains
\begin{gather}
\notag
\frac{\pd}{\pd u^j_k}\big(\hat \mm(u_0,u_1,\dots,u_k,\la)\big)=0\qquad\quad
\forall\, j=1,\dots,\diu.
\end{gather}
Therefore, the matrix-valued function~\er{hatm} is of the form $\hat \mm=\hat \mm(u_0,\dots,u_{k-1},\la)$. 
\end{proof}

Motivated by the result of Theorem~\ref{thmuk},
we give Definitions~\ref{dord},~\ref{dsim} and discuss this result in Remark~\ref{rsgt} below.
\begin{definition}
\lb{dord}
Let $p\in\zp$. A MLP $\big(\mm(\ud,\la),\,\lt(\ud,\la)\big)$ is said to be \emph{of~order~$\le p$} 
if its \mbox{$\cS$-part}~$\mm(\ud,\la)$ is of the form $\mm=\mm(u_0,\dots,u_{p},\la)$. 
In other words, a MLP $(\mm,\,\lt)$ is of~order~$\le p$ if
\begin{gather}
\notag
\forall\,i=1,\dots,\diu\qquad\forall\,\ell_1<0\qquad\forall\,\ell_2>p\qquad
\frac{\pd}{\pd u^i_{\ell_1}}(\mm)=0,\qquad\frac{\pd}{\pd u^i_{\ell_2}}(\mm)=0.
\end{gather}
\end{definition}
\begin{definition}
\lb{dsim}
Let $q\in\zsp$. A MLP $\big(\mm(\ud,\la),\,\lt(\ud,\la)\big)$ is said to be \emph{$q$-flat} 
if it is of~order~${\le q}$ and its \mbox{$\cS$-part} $\mm=\mm(u_0,\dots,u_{q},\la)$ satisfies
\begin{gather}
\lb{pdjq}
\forall\, i,j=1,\dots,\diu\qquad\quad
\frac{\pd}{\pd u^i_0}\Big(\frac{\pd}{\pd u^j_q}(\mm)\cdot \mm^{-1}\Big)=0.
\end{gather}
Note that for $q=k$ equation~\er{pdjq} coincides with equation~\er{pdm} from Theorem~\ref{thmuk}.
\end{definition}
\begin{remark}
\lb{rsgt}
Theorem~\ref{thmuk} implies the following. Let $k\in\zsp$.
If a MLP is $k$-flat, then, applying to it a suitable gauge transformation, 
we obtain a gauge equivalent MLP of~order~$\le k\!-\!1$. 
Similarly, if the obtained gauge equivalent MLP is $(k\!-\!1)$-flat (when $k\ge 2$), then, 
applying to it another gauge transformation, one gets a gauge equivalent MLP of~order~$\le k\!-\!2$.
If this MLP is $(k\!-\!2)$-flat (when $k\ge 3$), one can use the same procedure again, recursively.
We repeat this procedure as many times as possible, and the result is discussed below.

Thus, starting from a $k$-flat MLP $\big(\mm,\,\lt)$ with 
the \mbox{$\cS$-part} $\mm=\mm(u_0,\dots,u_{k},\la)$ 
and applying several gauge transformations recursively, 
we can derive a gauge equivalent MLP $\big(\check{\mm},\,\check{\lt})$ of~order~$\le p$ 
with some nonnegative integer~$p<k$ and the \mbox{$\cS$-part} 
$\check{\mm}=\check{\mm}(u_0,\dots,u_{p},\la)$. 
Therefore, the described procedure allows us to simplify the initial MLP $\big(\mm,\,\lt)$ 
in the sense that we eliminate the dependence on~$u_{p+1},\dots,u_k$ 
in the \mbox{$\cS$-part}.

Thus, Theorem~\ref{thmuk} gives sufficient conditions which allow one 
to simplify (in the above-mentioned sense) a MLP by applying gauge transformations to it.
\end{remark}

\section{Integrable equations of Itoh--Narita--Bogoyavlensky type}
\label{secind}

Let $\diu=1$. 
Consider the Itoh--Narita--Bogoyavlensky (INB) equation~\cite{bogoy88,itoh75,narita82}
\begin{gather}
\lb{be}
\inb_t=\inb(\inb_2+\inb_1-\inb_{-1}-\inb_{-2})
\end{gather}
for a scalar function $\inb=\inb(n,t)$.
Fix a constant $\ap\in\mathbb{C}$.
Consider the modified INB equation
\begin{gather}
\lb{mbe}
u_t=u(u - \ap)(u_2u_1 - u_{-1}u_{-2})
\end{gather}
for a scalar function $u=u(n,t)$.
It is known (see, e.g.,~\cite{adler-pos14,bogoy89} and references therein) that, 
for any solution $u=u(n,t)$ of equation~\er{mbe}, the function
\begin{gather}
\lb{bembe}
\inb=u_2u_1(u_0-\ap)
\end{gather}
satisfies the INB equation~\er{be}. 
More precisely, formula~\er{bembe} determines a MT 
of order~$2$ from equation~\er{mbe} to equation~\er{be}.

Also, it is known that equation~\er{be} possesses the MLP
\begin{gather}
\lb{lpbe}
\mm^{\text{\tiny{INB}}}=\begin{pmatrix}
 0& 1& 0\\
 0& 0& 1\\
 -\inb_{0}& 0& \la\\
\end{pmatrix},\qquad
\lt^{\text{\tiny{INB}}}=\begin{pmatrix}
\inb_{-2} + \inb_{-1}& 0& \la \\
-\la \inb_{0}& \inb_{-1} + \inb_{0}& \la^2\\
-\la^2 \inb_{0}& -\la \inb_{1}& \la^3 + \inb_{0} + \inb_{1}\\
\end{pmatrix}.
\end{gather}
Substituting~\er{bembe} in~\er{lpbe}, one obtains the following MLP for equation~\er{mbe}
\begin{gather}
\lb{Mmbe}
\mm(u_{0},u_{1},u_{2},\la)=
\begin{pmatrix}
 0 & 1 & 0 \\
 0 & 0 & 1 \\
 (\ap-u_{0}) u_{1} u_{2} & 0 & \la \\
\end{pmatrix},\\
\lb{Umbe}
\begin{split}
\lt(& u_{-2},u_{-1},u_0,u_{1},u_{2},u_{3},\la)= \\
&=\left(\begin{smallmatrix}
 -u_{0} (\ap u_{-1}-u_{-2} u_{-1}-u_{1} u_{-1}+\ap u_{1}) & 0 & \la \\
 \la (\ap-u_{0}) u_{1} u_{2} & -u_{1} (\ap u_{0}-u_{-1} u_{0}-u_{2} u_{0}+\ap u_{2}) & \la^2 \\
 \la^2 (\ap-u_{0}) u_{1} u_{2} & \la (\ap-u_{1}) u_{2} u_{3} & \la^3-\ap u_{1} u_{2}+u_{0} u_{1} u_{2}-\ap u_{2} u_{3}+u_{1} u_{2} u_{3} \\
\end{smallmatrix}\right).
\end{split}
\end{gather}
The matrix-valued function~\er{Mmbe}  satisfies equations~\er{pdm},~\er{pdjq} 
with $k=q=2$, $\diu=1$, $u^1_\ell=u_\ell$, $\ell\in\zz$.
Therefore, this MLP is $2$-flat (in the sense of Definition~\ref{dsim}), 
and we can apply Theorem~\ref{thmuk} with $k=2$.

In order to use formula~\er{guum} in the case $k=2$,
we need to choose a constant $a_0\in\fik$, substitute $u_0=a_0$ in $\mm(u_{0},u_{1},u_{2},\la)$
given by~\er{Mmbe}, and consider the inverse matrix
\begin{gather}
\lb{inMmbe}
\big(\mm(a_0,u_1,u_2,\la)\big)^{-1}=
\begin{pmatrix}
 0 & 1 & 0 \\
 0 & 0 & 1 \\
 (\ap-a_{0}) u_{1} u_{2} & 0 & \la \\
\end{pmatrix}^{-1}=
\begin{pmatrix}
 0 & -\frac{\la }{(\ap-a_0) u_1 u_2} & \frac{1}{(\ap-a_0) u_1 u_2} \\
 1 & 0 & 0 \\
 0 & 1 & 0 \\
\end{pmatrix}.
\end{gather}
We can choose for $a_0$ any value such that \er{inMmbe} is well defined.
In order to make formula~\er{inMmbe} as simple as possible, we take $a_0=\ap+1$.

According to~\er{guum} for $k=2$ and $a_0=\ap+1$, we consider the gauge transformation
\begin{gather}
\lb{guu1m}
\mathbf{g}(u_0,u_1,\la)=\cS^{-1}\Big(\big(\mm(\ap+1,u_1,u_2,\la)\big)^{-1}\Big)=
\cS^{-1}\begin{pmatrix}
 0 & \frac{\la }{u_1 u_2} & \frac{-1}{u_1 u_2} \\
 1 & 0 & 0 \\
 0 & 1 & 0 \\
\end{pmatrix}=\begin{pmatrix}
 0 & \frac{\la }{u_0 u_1} & \frac{-1}{u_0 u_1} \\
 1 & 0 & 0 \\
 0 & 1 & 0 \\
\end{pmatrix}.
\end{gather}
Applying the gauge transformation~\er{guu1m} to the MLP~\er{Mmbe},~\er{Umbe}, 
one gets the following gauge equivalent MLP
\begin{gather}
\lb{tMmbe}
\tilde{\mm}(u_{0},u_{1},\la)=\cS(\mathbf{g})\cdot \mm\cdot\mathbf{g}^{-1}=
\begin{pmatrix}
 0 & u_{0}-\ap & 0 \\
 0 & 0 & 1 \\
 -u_{0} u_{1} & 0 & \la \\
\end{pmatrix},\\
\lb{tUmbe}
\begin{split}
&\tilde{\lt}(u_{-2},u_{-1},u_0,u_{1},u_{2},\la)=
\tdt(\mathbf{g})\cdot\mathbf{g}^{-1}+\mathbf{g}\cdot\lt\cdot\mathbf{g}^{-1}=\\
=&\left(\begin{smallmatrix}
 u_{-1} (-\ap u_{-2}-\ap u_{0}+u_{0} u_{-2}+u_{0} u_{1}) & 0 & -\la (\ap-u_{-1}) \\
 -\la u_{0} u_{1} & -u_{0} (\ap u_{-1}+\ap u_{1}-u_{-2} u_{-1}-u_{1} u_{-1}) & \la^2 \\
 -\la^2 u_{0} u_{1} & \la u_{1} u_{2} (\ap-u_{0}) & -\ap u_{0} u_{1}-\ap u_{1} u_{2}+\la^3+u_{-1} u_{0} u_{1}+u_{0} u_{1} u_{2} \\
\end{smallmatrix}\right).
\end{split}
\end{gather}
Here $\tdt$ is the total derivative operator corresponding to equation~\eqref{mbe}.
That is, for any function $\fn=\fn(\ud)$ depending on a finite number 
of the variables $u_\ell$ for $\ell\in\zz$, one has
\begin{gather}
\notag
\tdt(\fn)=
\sum_{\ell\in\mathbb{Z}}\cS^\ell\big(u_0(u_0-\ap)(u_2u_1 - u_{-1}u_{-2})\big)\cdot\frac{\pd \fn}{\pd u_\ell}.
\end{gather}

In agreement with the proof of Theorem~\ref{thmuk} in the case $k=2$,
since \er{tMmbe} is obtained from~\er{Mmbe} 
by means of the gauge transformation~\er{guu1m} constructed by formula~\er{guum}, 
the matrix-valued function~\er{tMmbe} depends only on~$u_{0},\,u_{1},\,\la$,
in contrast to the matrix-valued function~\er{Mmbe} depending on~$u_{0},\,u_{1},\,u_2,\,\la$.

The matrix-valued function~\er{tMmbe} satisfies equations~\er{pdm},~\er{pdjq} 
with $k=q=1$, $\diu=1$, $u^1_\ell=u_\ell$, $\ell\in\zz$.
Therefore, the MLP~\er{tMmbe},~\er{tUmbe} is $1$-flat, 
and we can apply Theorem~\ref{thmuk} with $k=1$ to~\er{tMmbe}.

Now we are going to use formula~\er{guum} in the case $k=1$ 
with $\mm$ replaced by~$\tilde{\mm}(u_{0},u_{1},\la)$ given by~\er{tMmbe}.
To this end, one needs to choose a constant $a_0\in\fik$, 
substitute $u_0=a_0$ in $\tilde{\mm}(u_{0},u_{1},\la)$, and consider the inverse matrix
\begin{gather}
\lb{intMmbe1}
\big(\tilde{\mm}(a_0,u_{1},\la)\big)^{-1}=
\begin{pmatrix}
 0 & a_0-\ap & 0 \\
 0 & 0 & 1 \\
 -a_0 u_{1} & 0 & \la \\
\end{pmatrix}^{-1}=
\begin{pmatrix}
 0 & \frac{\la}{a_0 u_{1}} & -\frac{1}{a_0 u_{1}} \\
 \frac{1}{a_0-\ap} & 0 & 0 \\
 0 & 1 & 0 \\
\end{pmatrix}.
\end{gather}
One can choose for $a_0$ any value such that \er{intMmbe1} is well defined.
Suppose that $\ap\neq 1$. (The case $\ap=1$ is discussed separately in Remark~\ref{ra1} below.)
Assuming $\ap\neq 1$, we take $a_0=1$.

Now we derive a gauge transformation $\tilde{\mathbf{g}}(u_0,\la)$ 
by means of formula~\er{guum} with $k=1$, $a_0=1$, $\tilde{\mm}(a_0,u_{1},\la)$.
That is, 
\begin{gather}
\lb{tguu1m}
\tilde{\mathbf{g}}(u_0,\la)=\cS^{-1}\Big(\big(\tilde{\mm}(1,u_{1},\la)\big)^{-1}\Big)=
\cS^{-1}\begin{pmatrix}
 0 & \frac{\la}{u_{1}} & -\frac{1}{u_{1}} \\
 \frac{1}{1-\ap} & 0 & 0 \\
 0 & 1 & 0 \\
\end{pmatrix}=\begin{pmatrix}
 0 & \frac{\la}{u_{0}} & -\frac{1}{u_{0}} \\
 \frac{1}{1-\ap} & 0 & 0 \\
 0 & 1 & 0 \\
\end{pmatrix}.
\end{gather}
Applying the gauge transformation~\er{tguu1m} 
to the MLP~\er{tMmbe},~\er{tUmbe}, we obtain the following gauge equivalent MLP
\begin{gather}
\lb{checkMGmbe}
\check{\mm}(u_{0},\la)=\cS(\tilde{\mathbf{g}})\cdot \tilde{\mm}\cdot\tilde{\mathbf{g}}^{-1}=
\begin{pmatrix}
 0 & -(\ap-1) u_{0} & 0 \\
 0 & 0 & \frac{\ap-u_{0}}{\ap-1} \\
 -u_{0} & 0 & \la \\
\end{pmatrix},\\
\lb{checkUGmbe}
\begin{split}
&\check{\lt}(u_{-2},u_{-1},u_0,u_{1},\la)=
\tdt(\tilde{\mathbf{g}})\cdot\tilde{\mathbf{g}}^{-1}+
\tilde{\mathbf{g}}\cdot\tilde{\lt}\cdot\tilde{\mathbf{g}}^{-1}=\\
=&\left(\begin{smallmatrix}
 -\ap u_{-2} u_{-1}-\ap u_{0} u_{1}+u_{-2} u_{0} u_{-1}+u_{0} u_{1} u_{-1} & 0 & -\la u_{-1} (\ap-u_{-2}) \\
 -\frac{\la u_{0} (\ap-u_{-1})}{\ap-1} & -u_{-1} (\ap u_{-2}+\ap u_{0}-u_{0} u_{-2}-u_{0} u_{1}) & \frac{\la^2 (\ap-u_{-1})}{\ap-1} \\
 -\la^2 u_{0} & (\ap-1) \la u_{0} u_{1} & -\ap u_{-1} u_{0}-\ap u_{0} u_{1}+\la^3+u_{-2} u_{-1} u_{0}+u_{-1} u_{0} u_{1} \\
 \end{smallmatrix}\right).
\end{split}
\end{gather}

\begin{remark}
\label{ra1}
In the above computation we assumed $\ap\neq 1$, which allowed us to take $a_0=1$.
If $\ap=1$ then one can take $a_0=2$ and proceed in a similar fashion.
\end{remark}

Thus, using the procedure described in Remark~\ref{rsgt} and Theorem~\ref{thmuk},
we have simplified the MLP~\er{Mmbe},~\er{Umbe} 
by means of gauge transformations and 
got the gauge equivalent MLP~\er{checkMGmbe},~\er{checkUGmbe}.
Its $\cS$-part~\er{checkMGmbe} depends only on~$u_{0},\,\la$,
in contrast to~\er{Mmbe} depending on~$u_{0},\,u_{1},\,u_{2},\,\la$.

As mentioned in Remark~\ref{rmcds}, 
the paper~\cite{BIg2016} presents a method to derive 
MTs from a MLP in the case when its $\cS$-part 
depends only on~$u_{0},\,\la$ and satisfies certain conditions.

It can be shown that this method of~\cite{BIg2016} is applicable 
to the obtained MLP~\er{checkMGmbe},~\er{checkUGmbe} for the 
modified INB equation~\er{mbe}.
In this way one can derive several modified versions of equation~\er{mbe} 
which are connected to~\er{mbe} by MTs.
In particular, for any fixed constant $\cb\in\mathbb{C}$ 
one can obtain the following modified version of~\er{mbe}
\begin{gather}
\lb{vtlong}
v_t=\frac{\ap^4 (\ap^2 v-\cb^2) (\ap+\cb v) (\ap v_{-1} v+\cb) (\ap v v_1+\cb)\,A(v_{-2},v_{-1},v_1,v_2)}{(\ap^3 v_{-2} v_{-1} v+\ap^3+\ap^2 \cb v_{-2}+\ap^2 \cb v_{-1}+\ap^2 \cb v-\cb^3)\,B(v_{-1},v,v_1,v_2)},\\
\notag
\begin{split}
A(&v_{-2},v_{-1},v_1,v_2)=\ap^3 v_{-2} v_{-1}-\ap^3 v_1 v_2+\ap^2 \cb v_{-2} v_{-1} v_1+\\
&+\ap^2 \cb v_{-2} v_{-1} v_2-\ap^2 \cb v_{-2} v_1 v_2-\ap^2 \cb v_{-1} v_1 v_2-\ap \cb^2 v_{-1}+\ap \cb^2 v_1+\ap \cb^2 v_2-\ap \cb^2 v_{-2}-\cb^3 v_{-2} v_{-1}+\cb^3 v_1 v_2,
\end{split}\\
\notag
\begin{split}
B(&v_{-1},v,v_1,v_2)=\\
&=(\ap^3 v_{-1} v v_1+\ap^3+\ap^2 \cb v_{-1}+\ap^2 \cb v+\ap^2 \cb v_1-\cb^3)
(\ap^3 v v_1 v_2+\ap^3+\ap^2 \cb v+\ap^2 \cb v_1+\ap^2 \cb v_2-\cb^3)
\end{split}
\end{gather}
such that equation~\er{vtlong} is connected to~\er{mbe} by the MT
\begin{gather}
\lb{dslong}
u=\frac{\ap (\ap v v_1+\cb) (\ap^2 v_2-\cb^2)}{\ap^3 v v_1 v_2+\ap^3+\ap^2 \cb v+\ap^2 \cb v_1+\ap^2 \cb v_2-\cb^3}.
\end{gather}
Substituting~\er{dslong} in~\er{checkMGmbe},~\er{checkUGmbe}, one 
obtains a $\la$-dependent matrix Lax pair for~\er{vtlong}. 
Therefore, equation~\er{vtlong} is integrable in the sense that 
\er{vtlong} possesses a nontrivial matrix Lax pair with parameter~$\la$
and is connected by the MT~\er{dslong} 
to the well-known integrable modified INB equation~\er{mbe}.

A derivation of~\er{vtlong} and~\er{dslong} from 
the MLP~\er{checkMGmbe},~\er{checkUGmbe}
by the method of~\cite{BIg2016} is discussed in Remark~\ref{rdermt} below.
Equation~\er{vtlong} is connected to the INB equation~\er{be} 
by the composition of the two MTs~\er{bembe},~\er{dslong}.
Therefore, \er{vtlong} can be regarded as a doubly modified version 
of the INB equation. 

Computing the composition of the MTs~\er{bembe},~\er{dslong}, 
we get the following MT of order~$4$
\begin{gather}
\lb{ds4}
\inb=-\frac{\ap^4 (\ap+\cb v_{0}) (\ap+\cb v_{1}) (\ap v_{1} v_{2}+\cb)%
 (\ap^2 v_{3}-\cb^2) (\ap v_{2} v_{3}+\cb)(\ap^2 v_{4}-\cb^2)}%
{(\ap^3 v_{0} v_{1} v_{2}+\ap^3+\ap^2 \cb v_{0}+\ap^2 \cb v_{1}+\ap^2 \cb v_{2}-\cb^3)\,%
H(v_{1},v_{2},v_{3},v_{4})},\\
\notag
\begin{split}
H(&v_{1},v_{2},v_{3},v_{4})=\\
&=(\ap^3 v_{1} v_{2} v_{3}+\ap^3+\ap^2 \cb v_{1}+\ap^2 \cb v_{2}+\ap^2 \cb v_{3}-\cb^3) 
(\ap^3 v_{2} v_{3} v_{4}+\ap^3+\ap^2 \cb v_{2}+\ap^2 \cb v_{3}+\ap^2 \cb v_{4}-\cb^3)
\end{split}
\end{gather}
from equation~\er{vtlong} to the INB equation~\er{be}.

\begin{remark}
\label{rGarYam}
MTs of order~$2$ over the INB equation~\er{be} were studied in~\cite{GarYam2019}.
The MT~\er{dslong} of order~$2$ does not appear in~\cite{GarYam2019}, 
since \er{dslong} is over the modified INB equation~\er{mbe}.

The MT~\er{ds4} is of order~$4$ 
and does not appear in~\cite{GarYam2019} either.

Some MTs of order~$2$ over equation~\er{mbe} are presented in~\cite{MikhXenit2014,xenit18},
but \er{dslong} does not appear in~\cite{MikhXenit2014,xenit18}.
\end{remark}

\begin{remark}
\label{rdermt}
According to Definition~\ref{dmlpgt}, we can consider the auxiliary linear system
\begin{gather}
\lb{sys2}
\begin{aligned}
\cS(\Phi)&=\check{\mm}(u_{0},\la)\Phi,\\
\frac{\pd}{\pd t}(\Phi)&=\check{\lt}(u_{-2},u_{-1},u_0,u_{1},\la)\Phi
\end{aligned}
\end{gather}
corresponding to the MLP~\er{checkMGmbe},~\er{checkUGmbe}.
Here $\Phi=\Phi(n,t,\la)$ is an invertible $2\times 2$ matrix-valued function.

Following ideas of~\cite{BIg2016}, consider the Taylor expansion with respect to~$\la$
\begin{gather}
\lb{tayl}
\Phi=\Phi^0+\la \Phi^1+O(\la^2),
\end{gather}
where $\Phi^0=\Phi^0(n,t)$ and $\Phi^1=\Phi^1(n,t)$ are $2\times 2$ matrix-valued functions,
and substitute~\er{tayl} in~\er{sys2}, which yields
\begin{gather}
\lb{sys21}
\begin{aligned}
\cS(\Phi^0)+\la\cS(\Phi^1)&=\check{\mm}(u_{0},\la)\cdot(\Phi^0+\la \Phi^1)\qquad\text{modulo $O(\la^2)$},\\
\frac{\pd}{\pd t}(\Phi^0)+\la\frac{\pd}{\pd t}(\Phi^1)
&=\check{\lt}(u_{-2},u_{-1},u_0,u_{1},\la)\cdot(\Phi^0+\la \Phi^1)\qquad\text{modulo $O(\la^2)$}.
\end{aligned}
\end{gather}

Using the method of~\cite{BIg2016}, one can derive equation~\er{vtlong} and the MT~\er{dslong}
as a certain reduction of the matrix system~\er{sys21}.
A detailed derivation of~\er{vtlong} and~\er{dslong} from~\er{sys21}
by the method of~\cite{BIg2016} will be presented in a different publication.
\end{remark}

\section{Integrable equations related to the Toda lattice}
\label{seceqtod}

Consider the Toda lattice equation 
\begin{gather}
\lb{toda}
\gamma_{tt}=\exp(\gamma_{1}-\gamma)-\exp(\gamma-\gamma_{-1}),\qquad\quad
\gamma_1=\gamma(n+1,t),\qquad \gamma_{-1}=\gamma(n-1,t),
\end{gather}
for a scalar function $\gamma=\gamma(n,t)$.
Following~\cite{flaschka74,manakov75}, we consider 
the functions $\td^1(n,t)=\exp(\gamma-\gamma_{-1})$ and $\td^2(n,t)=\gamma_{t}$.
Then~\eqref{toda} implies  
\begin{gather}
\label{todanew}
\left\{
\begin{aligned}
\td^1_t&=\td^1(\td^2-\td^2_{-1}),\\
\td^2_t&=\td^1_1-\td^1.
\end{aligned}\right.
\end{gather}
The $2$-component equation~\eqref{todanew} 
is sometimes called the Toda lattice written in the Flaschka--Manakov coordinates.

It is known (see, e.g.,~\cite{kmw}) that the following matrices form a MLP for~\eqref{todanew}
\begin{gather}
\lb{lrnew}
\mm= \begin{pmatrix} 
\la+\td^2 & \td^1 \\ -1 & 0 
\end{pmatrix},\qquad\quad
\lt=  
\begin{pmatrix}
0 &-\td^1\\
1&\la+\td^2_{-1}
\end{pmatrix}.
\end{gather}
Using known methods to derive MTs from a given MLP
(see, e.g.,~\cite{BIg2016} and references therein), 
from the MLP~\er{lrnew} one can obtain the following.
Fix a constant $\ct\in\mathbb{C}$. The $2$-component equation
\begin{gather}
\lb{mtoda}
\left\{
\begin{gathered}
u^{1}_t=
\frac{u^{1}_{0}\, H^1(u^{1}_{-1},u^{2}_{-1},u^{1}_{0},u^{2}_{0},u^{1}_{1},u^{2}_{1},\ct)}{(u^{1}_{-1}-u^{2}_{-1}) (u^{1}_{0}-u^{2}_{0})},\\
u^{2}_t=
-\frac{u^{2}_{0}\, H^2(u^{1}_{-1},u^{2}_{-1},u^{1}_{0},u^{2}_{0},u^{1}_{1},u^{2}_{1},\ct)}{(u^{1}_{-1}-u^{2}_{-1}) (u^{2}_{0}-u^{1}_{0})},\\
\end{gathered}\right.
\end{gather}
where
\begin{gather*}
\begin{split}
H^1(u^{1}_{-1},u^{2}_{-1},u^{1}_{0},u^{2}_{0},&u^{1}_{1},u^{2}_{1},\ct)=
-\ct u^{2}_{-1} u^{1}_{0}+\ct u^{1}_{-1} u^{2}_{0}+u^{2}_{-1} (u^{1}_{0})^2-2 u^{2}_{-1} u^{2}_{0} u^{1}_{0}+\\
&+u^{2}_{-1} (u^{2}_{0})^2-u^{1}_{-1} u^{1}_{1} u^{2}_{0}+u^{1}_{1} u^{2}_{-1} u^{2}_{0}+u^{1}_{-1} u^{2}_{0} u^{2}_{1}-u^{2}_{-1} u^{2}_{0} u^{2}_{1},
\end{split}\\
\begin{split}
H^2(u^{1}_{-1},u^{2}_{-1},u^{1}_{0},u^{2}_{0},&u^{1}_{1},u^{2}_{1},\ct)=
-\ct u^{2}_{-1} u^{1}_{0}+\ct u^{1}_{-1} u^{2}_{0}+u^{1}_{-1} (u^{1}_{0})^2-u^{1}_{-1} u^{1}_{1} u^{1}_{0}+\\
&+u^{1}_{1} u^{2}_{-1} u^{1}_{0}-2 u^{1}_{-1} u^{2}_{0} u^{1}_{0}+u^{1}_{-1} u^{2}_{1} u^{1}_{0}-u^{2}_{-1} u^{2}_{1} u^{1}_{0}+u^{1}_{-1} (u^{2}_{0})^2,
\end{split}
\end{gather*}
is connected to~\er{todanew} by the MT
\begin{gather}
\lb{dstoda}
\left\{
\begin{aligned}
\td^1&=
\frac{u^{1}_{0} u^{2}_{0} (u^{1}_{1}-u^{2}_{1}-\ct)}{u^{1}_{0}-u^{2}_{0}},\\
\td^2&=
\frac{\ct u^{2}_{0}-u^{1}_{1} u^{1}_{0}+u^{2}_{0} u^{2}_{1}}{u^{1}_{0}-u^{2}_{0}}.\\
\end{aligned}\right.
\end{gather}

In this section we use the notation~\er{uldiu} with $\diu=2$.
That is, for each $\ell\in\zz$ we have $u_\ell=(u^1_\ell,u^2_\ell)$.
Substituting~\er{dstoda} in~\er{lrnew}, one obtains the following MLP for equation~\er{mtoda}
\begin{gather}
\lb{Mmtoda}
\mm(u_{0},u_{1},\la)=\begin{pmatrix} 
\la+
\frac{\ct u^{2}_{0}-u^{1}_{1} u^{1}_{0}+u^{2}_{0} u^{2}_{1}}{u^{1}_{0}-u^{2}_{0}} & 
\frac{u^{1}_{0} u^{2}_{0} (u^{1}_{1}-u^{2}_{1}-\ct)}{u^{1}_{0}-u^{2}_{0}} \\ 
-1 & 0 
\end{pmatrix},\\
\lb{Umtoda}
\lt(u_{-1},u_{0},u_{1},\la)=
\begin{pmatrix}
0 &\frac{u^{1}_{0} u^{2}_{0} (u^{2}_{1}+\ct-u^{1}_{1})}{u^{1}_{0}-u^{2}_{0}}\\
1&\la+\frac{\ct u^{2}_{-1}-u^{1}_{0} u^{1}_{-1}+u^{2}_{-1} u^{2}_{0}}{u^{1}_{-1}-u^{2}_{-1}}
\end{pmatrix}.
\end{gather}
The matrix-valued function~\er{Mmtoda}  satisfies equations~\er{pdm},~\er{pdjq} 
with $k=q=1$, $\diu=2$.
Therefore, this MLP is $1$-flat (in the sense of Definition~\ref{dsim}), 
and we can apply Theorem~\ref{thmuk} with $k=1$.

In order to use formula~\er{guum} in the case $k=1$,
we need to choose a constant vector $a_0=(a^1_0,a^2_0)\in\fik^2$, 
substitute $u^1_0=a^1_0$, $u^2_0=a^2_0$ in $\mm(u_{0},u_{1},\la)$
given by~\er{Mmtoda}, and consider the inverse matrix
\begin{gather}
\lb{inMtoda}
\big(\mm(a_0,u_1,\la)\big)^{-1}=
\begin{pmatrix} 
\la+
\frac{\ct a^{2}_{0}-u^{1}_{1} a^{1}_{0}+a^{2}_{0} u^{2}_{1}}{a^{1}_{0}-a^{2}_{0}} & 
\frac{a^{1}_{0} a^{2}_{0} (u^{1}_{1}-u^{2}_{1}-\ct)}{a^{1}_{0}-a^{2}_{0}} \\ 
-1 & 0 
\end{pmatrix}^{-1}=
\begin{pmatrix}
 0 & -1 \\
 \frac{a^{2}_{0}-a^{1}_{0}}{a^{1}_{0} a^{2}_{0} (\ct-u^{1}_{1}+u^{2}_{1})} & 
\frac{\la(a^{2}_{0}-a^{1}_{0})+u^{1}_{1} a^{1}_{0}-\ct a^{2}_{0}-a^{2}_{0} u^{2}_{1}}{a^{1}_{0} a^{2}_{0} (\ct-u^{1}_{1}+u^{2}_{1})} \\
\end{pmatrix}.
\end{gather}
One can choose for $a_0=(a^1_0,a^2_0)$ any value such that \er{inMtoda} is well defined.
We take $a^1_0=2$, $\,a^2_0=1$.

According to~\er{guum} for $k=1$ and $a_0=(2,1)$, we consider the gauge transformation
\begin{gather}
\lb{gmtoda}
\mathbf{g}(u_0,\la)=\cS^{-1}\Big(\big(\mm(a_0,u_1,\la)\big)^{-1}\Big)=
\cS^{-1}\begin{pmatrix}
 0 & -1 \\
 \frac{-1}{2 (\ct-u^{1}_{1}+u^{2}_{1})} & \frac{-\la+2 u^{1}_{1}-\ct-u^{2}_{1}}{2 (\ct-u^{1}_{1}+u^{2}_{1})} \\
\end{pmatrix}
=\begin{pmatrix}
 0 & -1 \\
 \frac{-1}{2 (\ct-u^{1}_{0}+u^{2}_{0})} & \frac{-\la+2 u^{1}_{0}-\ct-u^{2}_{0}}{2 (\ct-u^{1}_{0}+u^{2}_{0})} \\
\end{pmatrix}.
\end{gather}
Below we use the total derivative operator~$\tdt$ corresponding to equation~\eqref{mtoda}.
Applying the gauge transformation~\er{gmtoda} to the MLP~\er{Mmtoda},~\er{Umtoda}, 
one gets the following gauge equivalent MLP
\begin{gather}
\lb{tMmtoda}
\tilde{\mm}(u_{0},\la)=\cS(\mathbf{g})\cdot \mm\cdot\mathbf{g}^{-1}=
\begin{pmatrix}
 \la+\ct-2 u^{1}_{0}+u^{2}_{0} & -2 (\ct-u^{1}_{0}+u^{2}_{0}) \\
 \frac{1}{2} (\la+\ct-2 u^{1}_{0}+2 u^{2}_{0}-\frac{(\la+\ct) u^{2}_{0}}{u^{1}_{0}-u^{2}_{0}}) & \frac{(u^{1}_{0}-2 u^{2}_{0}) (-\ct+u^{1}_{0}-u^{2}_{0})}{u^{1}_{0}-u^{2}_{0}} \\
\end{pmatrix},\\
\lb{tUmtoda}
\begin{split}
&\tilde{\lt}(u_{-1},u_0,\la)=
\tdt(\mathbf{g})\cdot\mathbf{g}^{-1}+\mathbf{g}\cdot\lt\cdot\mathbf{g}^{-1}=\\
=&\begin{pmatrix}
 -\frac{(u^{1}_{-1}-2 u^{2}_{-1}) (\ct-u^{1}_{0}+u^{2}_{0})}{u^{1}_{-1}-u^{2}_{-1}} & 2 (\ct-u^{1}_{0}+u^{2}_{0}) \\
u^{1}_{0}-u^{2}_{0}+
 \frac{(\la+\ct)(2u^{2}_{-1}-u^{1}_{-1})}{2 (u^{1}_{-1}-u^{2}_{-1})} & \la+\ct-\frac{(2 u^{1}_{-1}-u^{2}_{-1}) (u^{1}_{0}-u^{2}_{0})}{u^{1}_{-1}-u^{2}_{-1}} 
\end{pmatrix}.
\end{split}
\end{gather}
Here, applying the procedure discussed in Remark~\ref{rsgt},
we have simplified the MLP~\er{Mmtoda},~\er{Umtoda} 
by the gauge transformation~\er{gmtoda} and 
obtained the gauge equivalent MLP~\er{tMmtoda},~\er{tUmtoda}.
The $\cS$-part~\er{tMmtoda} of the obtained MLP depends only on~$u_{0},\,\la$,
unlike the matrix~\er{Mmtoda} depending on~$u_{0},\,u_{1},\,\la$.

As discussed in Remark~\ref{rmcds}, 
the paper~\cite{BIg2016} describes a method to derive 
MTs from a MLP with $\cS$-part depending only on~$u_{0},\,\la$ 
and satisfying certain conditions.
This method is applicable 
to the obtained MLP~\er{tMmtoda},~\er{tUmtoda} for equation~\er{mtoda}, 
where $\ct$ is a fixed constant.
In this way one can derive several modified versions of equation~\er{mtoda} 
which are connected to~\er{mtoda} by MTs.
Since in the case of arbitrary~$\ct$ the corresponding formulas are rather cumbersome,
below we consider the case $\ct=0$.

The matrix-valued functions~\er{tMmtoda},~\er{tUmtoda} with $\ct=0$ 
form a MLP for equation~\er{mtoda} with $\ct=0$.
From this MLP one can derive the following. The $2$-component equation
\begin{gather}
\lb{longtoda}
\left\{
\begin{aligned}
&\wv^1_t=
-\frac{\wv^{1}_{0} (\wv^{1}_{0}+1) (\wv^{1}_{0} \wv^{1}_{1}+\wv^{2}_{1} \wv^{1}_{1}+\wv^{1}_{1}-\wv^{1}_{0} \wv^{2}_{1})}{\wv^{1}_{0} \wv^{1}_{1} \wv^{2}_{0}+\wv^{1}_{1} \wv^{2}_{0}-\wv^{1}_{0} \wv^{2}_{1} \wv^{2}_{0}+\wv^{1}_{1} \wv^{2}_{1} \wv^{2}_{0}-\wv^{1}_{0} \wv^{2}_{1}-\wv^{1}_{0} \wv^{1}_{1} \wv^{2}_{1}},\\
&\wv^2_t=
\frac{P(\wv^{1}_{-1},\wv^{1}_{0},\wv^{2}_{0},\wv^{1}_{1},\wv^{2}_{1})}{(\wv^{1}_{-1}-\wv^{1}_{0}) (\wv^{1}_{0} \wv^{1}_{1} \wv^{2}_{0}+\wv^{1}_{1} \wv^{2}_{0}-\wv^{1}_{0} \wv^{2}_{1} \wv^{2}_{0}+\wv^{1}_{1} \wv^{2}_{1} \wv^{2}_{0}-\wv^{1}_{0} \wv^{2}_{1}-\wv^{1}_{0} \wv^{1}_{1} \wv^{2}_{1})},\end{aligned}\right.
\end{gather}
where 
\begin{gather*}
\wv^1=\wv^1(n,t),\qquad\quad\wv^2=\wv^2(n,t),\\
P(\wv^{1}_{-1},\wv^{1}_{0},\wv^{2}_{0},\wv^{1}_{1},\wv^{2}_{1})=
-\wv^{1}_{-1} \wv^{1}_{1} \wv^{2}_{0} (\wv^{1}_{0})^2+\wv^{1}_{-1} \wv^{2}_{1} (\wv^{1}_{0})^2+\wv^{1}_{-1} \wv^{1}_{1} \wv^{2}_{1} (\wv^{1}_{0})^2+\\
+\wv^{1}_{1} \wv^{2}_{1} (\wv^{1}_{0})^2+\wv^{1}_{-1} \wv^{2}_{0} \wv^{2}_{1} (\wv^{1}_{0})^2+\wv^{1}_{1} \wv^{2}_{0} \wv^{2}_{1} (\wv^{1}_{0})^2+\wv^{2}_{0} \wv^{2}_{1} (\wv^{1}_{0})^2+\\
+\wv^{2}_{1} (\wv^{1}_{0})^2-2 \wv^{1}_{-1} \wv^{1}_{1} \wv^{2}_{0} \wv^{1}_{0}-2 \wv^{1}_{-1} \wv^{1}_{1} \wv^{2}_{0} \wv^{2}_{1} \wv^{1}_{0}-\wv^{1}_{-1} \wv^{1}_{1} \wv^{2}_{0}-\wv^{1}_{-1} \wv^{1}_{1} \wv^{2}_{0} \wv^{2}_{1},
\end{gather*}
is connected to equation~\er{mtoda} with $\ct=0$ by the MT
\begin{gather}
\lb{dsltoda}
\left\{
\begin{gathered}
u^{1}=
\frac{\wv^{1}_{0} (\wv^{1}_{1}+1) (\wv^{1}_{0} \wv^{1}_{1}+\wv^{2}_{1} \wv^{1}_{1}+\wv^{1}_{1}-\wv^{1}_{0} \wv^{2}_{1})}{(\wv^{1}_{1}-\wv^{1}_{0})(-\wv^{1}_{1} \wv^{2}_{0}-\wv^{1}_{0} \wv^{1}_{1} \wv^{2}_{0}+\wv^{1}_{0} \wv^{2}_{1} \wv^{2}_{0}-\wv^{1}_{1} \wv^{2}_{1} \wv^{2}_{0}+\wv^{1}_{0} \wv^{2}_{1}+\wv^{1}_{0} \wv^{1}_{1} \wv^{2}_{1})},\\
u^{2}=
\frac{(\wv^{1}_{0}+1) \wv^{1}_{1} (\wv^{1}_{0} \wv^{1}_{1}+\wv^{2}_{1} \wv^{1}_{1}+\wv^{1}_{1}-\wv^{1}_{0} \wv^{2}_{1})}{(\wv^{1}_{1}-\wv^{1}_{0})(-\wv^{1}_{1} \wv^{2}_{0}-\wv^{1}_{0} \wv^{1}_{1} \wv^{2}_{0}+\wv^{1}_{0} \wv^{2}_{1} \wv^{2}_{0}-\wv^{1}_{1} \wv^{2}_{1} \wv^{2}_{0}+\wv^{1}_{0} \wv^{2}_{1}+\wv^{1}_{0} \wv^{1}_{1} \wv^{2}_{1})}.
\end{gathered}\right.
\end{gather}
Hence equation~\er{longtoda}
can be viewed as a modified version of equation~\er{mtoda} with $\ct=0$.

Equation~\er{longtoda} is connected to equation~\er{todanew}
by the composition of the two MTs~\er{dsltoda} and~\er{dstoda} with $\ct=0$.
Therefore, \er{longtoda} can be regarded as a doubly modified version of equation~\er{todanew}. 

Substituting~\er{dsltoda} in the matrix-valued functions~\er{tMmtoda},~\er{tUmtoda} with $\ct=0$, 
we derive a $\la$-dependent matrix Lax pair for~\er{longtoda}. 
Hence equation~\er{longtoda} is integrable in the sense that it 
has a nontrivial matrix Lax pair with parameter~$\la$ 
and is connected to the known integrable equation~\er{todanew}
by the composition of the two MTs discussed above.

A detailed derivation of~\er{longtoda} and~\er{dsltoda} 
from the MLP~\er{tMmtoda},~\er{tUmtoda} with $\ct=0$
by the method of~\cite{BIg2016} will be presented in a different publication.

\section{Conclusions}
\label{sconcl}

In this paper we have presented sufficient conditions 
which allow one to simplify a given MLP by gauge transformations 
and described a procedure for such a simplification.
Furthermore, we have demonstrated how this procedure helps 
(in the considered examples) to construct modified and doubly modified versions 
with MTs for a given differential-difference equation possessing a MLP satisfying certain conditions.

This has allowed us to derive new integrable equations connected by new MTs to known equations.
In particular, we have obtained new integrable equations~\er{vtlong}, \er{mtoda}, \er{longtoda}
and new MTs \er{dslong}, \er{ds4}, \er{dstoda}, \er{dsltoda}.

The results of this paper considerably extend the ones of~\cite{BIg2016},
since the paper~\cite{BIg2016} 
(where MLPs are called Darboux-Lax representations)
does not contain any analog of our Theorem~\ref{thmuk}.
Furthermore, the obtained integrable equations~\er{vtlong}, \er{mtoda}, \er{longtoda}
and MTs \er{dslong}, \er{ds4}, \er{dstoda}, \er{dsltoda} do not appear in~\cite{BIg2016}.

Motivated by the results of the present paper, we suggest the following open problems
for further research.

\textbf{Open problem 1.}
 As shown in Sections~\ref{secsimpl},~\ref{secind},~\ref{seceqtod}, 
	our results on simplifications of differential-difference matrix Lax pairs 
	(satisfying certain conditions) 
	by gauge transformations are based on a detailed study 
	of the action of the gauge transformations group
	on the set of MLPs of a given differential-difference equation 
	and allow us to construct new MTs for some differential-difference equations.
It would be interesting to obtain analogous results 
in the case of (1+1)-dimensional partial differential equations (PDEs).
In this direction one can use results of~\cite{igonarx12,igmanno19}
about the gauge transformations group action on matrix Lax pairs 
(also called zero-curvature representations) for (1+1)-dimensional PDEs
and constructions of MTs from some zero-curvature representations 
of such PDEs~\cite{drin-sok85,igon05}.

\textbf{Open problem 2.} 
In this paper we consider functions depending on dynamical variables 
(e.g. $u_\ell$ or $v_\ell$, $\,\ell\in\zz$,) and, possibly, on some parameters.
This means that we study the \emph{autonomous case}.
It would be interesting to extend the obtained results to the \emph{non-autonomous case},
where (in addition to dependence on dynamical variables and parameters) 
explicit dependence on the independent variables $n,t$ is allowed.

\section*{Acknowledgments}
The author would like to thank Rustem Garifullin and Maxim Pavlov for useful discussions.

The results in Section~\ref{secsimpl} 
were obtained with support from the grant No.~{21-71-30011}
of the Russian Science Foundation, \url{https://rscf.ru/en/project/21-71-30011/}.

The research presented in Sections~\ref{secind},~\ref{seceqtod}
was funded by the Ministry of Science and Higher Education of Russia
(Agreement on allocation of subsidy No.~{075-02-2024-1442}
for the Regional Mathematical Center ``Center of Integrable Systems'' 
of Yaroslavl State University).

\end{document}